\newcommand\Z{\mathbb Z}
\newcommand\R{\mathbb R}
\newcommand\C{\mathbb C}
\newcommand\Hl{\mathbb H}
\newcommand\hp{\Hl{\rm P}}
\newcommand\cp{\C{\rm P}}
\newcommand\rp{\R{\rm P}}
\newcommand\So{\operatorname{SO}}
\newcommand\Su{\operatorname{SU}}
\newcommand\OO{\operatorname{O}}
\newcommand\su{{\mathfrak{su}}}
\newcommand\la{\lambda}
\newcommand\x{\mathbf x}
\newcommand\y{\mathbf y}
\newcommand\p{\mathbf p}
\numberwithin{equation}{section}
\newtheorem{Theorem}{Theorem}[section]
\newtheorem{Lemma}[Theorem]{Lemma}
\newtheorem{Proposition}[Theorem]{Proposition}
 { \theoremstyle{definition}

\newtheorem{Example}[Theorem]{Example}
\newtheorem{Remark}[Theorem]{Remark} }
\begin{document}

\allowdisplaybreaks

\newcommand{\arXivNumber}{1602.07212}

\renewcommand{\PaperNumber}{057}

\FirstPageHeading

\ShortArticleName{Singular Instantons and Painlev\'e VI}

\ArticleName{Singular Instantons and Painlev\'e~VI}

\Author{Richard MU\~{N}IZ MANASLISKI}

\AuthorNameForHeading{R.~Mu\~{n}iz Manasliski}

\Address{Centro de Matem\'atica, Facultad de Ciencias,\\
Igu\'a 4225 esq. Mataojo C.P. 11400, Montevideo, Uruguay}
\Email{\href{mailto:rmuniz@cmat.edu.uy}{rmuniz@cmat.edu.uy}}

\ArticleDates{Received February 26, 2016, in f\/inal form June 09, 2016; Published online June 15, 2016}

\Abstract{We consider a two parameter family of instantons, which is studied in~[Sadun~L., \textit{Comm. Math. Phys.} \textbf{163} (1994), 257--291], invariant under the irreducible action of~$\Su_2$ on~$S^4$, but which are not globally def\/ined. We will see that these instantons produce solutions to a one parameter family of Painlev\'e~VI equations ($\text{P}_{\text{VI}}$) and we will give an explicit expression of the map between instantons and solutions to~$\text{P}_{\text{VI}}$. The solutions are algebraic only for that values of the parameters which correspond to the instantons that can be extended to all of~$S^4$. This work is a generalization of [Mu{\~n}iz~Manasliski~R., \textit{Contemp. Math.}, Vol.~434, Amer. Math. Soc., Providence, RI, 2007, 215--222] and [Mu{\~n}iz~Manasliski~R., \textit{J.~Geom. Phys.} \textbf{59} (2009), 1036--1047, arXiv:1602.07221], where instantons without singularities are studied.}

\Keywords{twistor theory; Yang--Mills instantons; isomonodromic deformations}

\Classification{34M55; 53C07; 53C28}

\section{Introduction}

Yang--Mills instantons are anti-self-dual (or self-dual) connections over four-dimensional, orien\-table, Riemannian manifolds. They are absolute minima of the Yang--Mills functional
\begin{gather*}
\operatorname{YM}(M)=-\int_M\operatorname{Tr}(F^{\nabla}\wedge\ast F^{\nabla})
\end{gather*}
restricted to a f\/ixed Chern number, where $\ast$ is the Hodge operator. Instantons appear in many instances in mathematics and physics and since the work of Simon Donaldson we know that they are a fundamental tool in the study of the topology of four-dimensional manifolds. In~\cite{KroMro} Kronheimer and Mrowka introduced a moduli space of instantons on a four manifold having certain type of singularity along an embedded surface. They are called instantons with holonomic singularity because they have non trivial asymptotic holonomy when we consider small circles around the surface. The goal in~\cite{KroMro} was to study topological obstructions to the embedding of a surface into a four-dimensional manifold. This kind of instantons was f\/irst introduced by physicists~\cite{Chang} and they are also known as fractionally charged instantons since their second Chern number is not necessarily an integer. We consider instantons with holonomic singularities on~$S^4$ which are invariant under an action of~$\Su_2$ and explore their relation with certain solutions to the famous Painlev\'e~VI equation $\text{P}_{\text{VI}}$. The relation between instantons and $\text{P}_{\text{VI}}$ has been extensively studied, see for example~\cite{MasWoo2, MasWoo, Woo}.

Painlev\'e VI equation is an ODE on the complex domain, depending on four complex para\-me\-ters, and it is the most important second order equation having what is called the {\it Painlev\'e property} (that is, absence of movable critical singularities). Critical singularities of $\text{P}_{\text{VI}}$ can only be located at~0,~1 or~$\infty$. This equation appears when isomonodromic deformations of certain connections are considered~\cite{Fuc,JiMi}. Recently $\text{P}_{\text{VI}}$ has been interpreted as the simplest non abelian Gauss--Manin connection~\cite{Boa}. Typically solutions are ``new'' transcendental functions and to f\/ind them is a highly non trivial activity (general transcendental solutions have been obtained in~\cite{GamIorLis}). Nevertheless, for certain values of the parameters one can f\/ind ``classical''~\cite{Mazz, Wat} or even algebraic solutions. One main problem was to f\/ind a list of the algebraic solutions to $\text{P}_{\text{VI}}$ analogous to the Schwartz's list for the hypergeometric equation~\cite{Boa2, DubMazz}, this is archived in~\cite{LisTyk}.

In~\cite{Mun2} we studied symmetric instantons def\/ined on all of $S^4$. They form a countable family and the Painlev\'e equations related to them are all equivalent. By equivalent we mean that they are in the same orbit of the Okamoto af\/f\/ine Weyl group of type~$F_4$~\cite{Boa, Oka}. Allowing instantons to have singularities we have a continuous family of non equivalent $\text{P}_{\text{VI}}$ equations. More precisely, for each real number $\theta$ we have the equation
 \begin{gather*}
 \frac{d^2y}{dx^2}=
 \frac12\left(\frac1y+\frac1{y-1}+\frac1{y-x}\right)
 \left(\frac{dy}{dx}\right)^2-
 \left(\frac1x+\frac1{x-1}+\frac1{y-x}\right)
 \left(\frac{dy}{dx}\right) \\
\hphantom{\frac{d^2y}{dx^2}=}{}
+\left(\frac18(\theta\pm2)^2+\frac18\theta^2
\frac{x}{y^2}+
 \frac18\theta^2\frac{x-1}{(y-1)^2}+\frac18(\theta^2-4)
\frac{x(x-1)}{(y-x)^2}\right)\frac{y(y-1)(y-x)}{x^2(x-1)^2}.
\end{gather*}
For $\theta=1$ the solution corresponding to the non singular instanton is one of Hitchin's octahedral solutions~\cite{Hit1,Hit4}.

The above set of parameters lies on the intersection of three ref\/lecting hyperplanes of the Weyl group. Being more precise, the Painlev\'e equation
depends on four parameters $(\theta_1,\theta_2,\theta_3,\theta_4)$ and the ref\/lections with respect to the planes $\theta_1=0$, $\theta_2=0$, $\theta_3=0$, $\theta_4=1$, $\sum\theta_i=0$ generate an af\/f\/ine Weyl group of type $D_4$ of symmetries. In our case $\theta_1=\theta_2=\theta_3=\theta_4=\theta/2$, hence the one parameter family lies in the intersection of the three ref\/lecting hyperplanes: $\theta_1- \theta_2+\theta_3-\theta_4=0$, $\theta_1+\theta_2-\theta_3-\theta_4=0$, $\theta_1-\theta_2-\theta_3+\theta_4=0$. This family of parameters is equivalent to that considered by Dubrovin and Mazzoco in~\cite{DubMazz}; in fact, parameters of the form $(\theta/2,\theta/2,\theta/2,\theta/2)$ are in the same orbit (under the Weyl group) as $(0,0,0,\theta)$, and the parameter $\mu$ of~\cite{DubMazz} in terms of $\theta$ is $\mu=\theta/2$. The transformation relating both families of parameters is the so called Okamoto transformation and it is given by
\begin{gather*}
(x,y,\boldsymbol{\theta})\mapsto \left(x,y+\frac{\delta}{q},\boldsymbol{\theta}-\boldsymbol{\delta}\right),
\end{gather*}
where
\begin{gather*}
\boldsymbol{\theta}=(\theta_1,\theta_2,\theta_3,\theta_4),\qquad \boldsymbol{\delta}=(\delta,\delta,\delta,\delta), \qquad\delta=\frac12\sum_i\theta_i,\\
2q=\frac{(x-1)y'-\theta_1}{y}+\frac{y'-1-\theta_2}{y-x}-\frac{xy'+\theta_3}{y-1}.
\end{gather*}

Dubrovin and Mazzoco f\/ind all algebraic solutions considering a special class of solutions having a specif\/ic asymptotic behaviour around the critical points. It is said that a branch of a~solution to~$\text{P}_{\text{VI}}(\theta)$ has critical behaviour of algebraic type in~0 if there exist $\ell_0\in\R$, $a_0\in\C$ and $\epsilon>0$ such that
\begin{gather}\label{eq1.1}
y(x)=a_0x^{\ell_0}\big(1+\mathcal{O}\big(x^{\epsilon}\big)\big)\qquad\text{as} \quad x\to 0.
\end{gather}

Obviously, any algebraic function verif\/ies this property with $\ell_0$ rational. It is easy to see that the Okamoto transformation preserves this type of solutions, without changing~$\ell_0$. As it is proved in~\cite[Theorem~2.1]{DubMazz}, for each nonresonant value of $\theta$ (i.e.,~$\theta\notin\Z$) there exists a solution to $\text{P}_{\text{VI}}(\theta)$ with asymptotic behaviour prescribed by~\eqref{eq1.1}. Such a solution will also have critical behaviour of algebraic type at 1 and $\infty$, i.e., there exist $(a_1,\ell_1)$ and $(a_{\infty},\ell_{\infty})$ such that
\begin{gather*}
\begin{split}
 & y(x) =1-a_1(1-x)^{\ell_1}\big(1+\mathcal{O}\big((1-x)^{\epsilon}\big)\big)\qquad\text{as} \quad x\to 1,\\
 & y(x)=a_{\infty}x^{1-\ell_{\infty}}\big(1+\mathcal{O}\big(x^{-\epsilon}\big)\big)\qquad\text{as} \quad x\to \infty.
\end{split}
\end{gather*}
For the solution to be algebraic the parameters $\ell_i$ must be rational and satisfy $0<\ell_i\leq1$. In Section~\ref{section3} we f\/ind an explicit expression for the solution to Painlev\'e's equation in terms of the invariant instanton, from which, using the previous facts, we show that solutions from Sadun's instantons~\cite{Sad1} are not algebraic.

A generalization of the Dubrovin--Mazzoco strategy can be found in~\cite{Boa2}, where new explicit algebraic solutions were found the parameters of which lie in the interior of a fundamental domain. The same circle of ideas leads f\/inally to the classif\/ication of all algebraic solutions in~\cite{LisTyk}.

In Section~\ref{section2} we brief\/ly describe the action of $\Su_2$ on $S^4$ which is considered and we remember the explicit form of the reduced ASD equations for invariant instantons. In Section~\ref{section3}, which is the main part of the paper, using the relation between symmetric instantons and isomonodromic deformations we make the calculations to f\/ind an explicit expression for the solution of $\text{P}_{\text{VI}}$ in terms of the instanton. That is, we f\/ind the explicit form of the map from symmetric instantons to solutions of $\text{P}_{\text{VI}}$. Section~\ref{section4} is devoted to study the special case of instantons with holonomic singularity that were def\/ined by Kronheimer and Mrowka and we compute the parameters of the corresponding $\text{P}_{\text{VI}}$ equations. Finally, in Section~\ref{section5} we state the result of Sadun showing the existence of instantons of the kind studied in Section~\ref{section4}, and show that the corresponding solutions to $\text{P}_{\text{VI}}$ are not algebraic unless the instanton can be smoothly extended to all of the 4-sphere.

\section{The action and some notations}\label{section2}

Let us identify $\C^4$ with the space of homogeneous polynomials of degree three, in two variables, and with complex coef\/f\/icients
\begin{gather*}
\C^4\cong\big\{\p(\x,\y)=z_1\x^3+z_4\x^2\y+z_3\x\y^2+z_2\y^3
\colon z_i\in\C,\, i=1,2,3,4\big\}.
\end{gather*}

$\Su_2$ acts as usual on the space of polynomials $g\cdot\p(\x,\y)=\p((\x,\y)\bar g)$, and this representation is quaternionic if we identify $\C^4$ with $\Hl^2$ through the map
\begin{gather*} (z_1,z_2,z_3,z_4)\longmapsto(z_1+z_2j,z_3+z_4j),
\end{gather*}
viewing $\Hl^2$ as a left $\Hl$-module. Using the identif\/ication $\hp^1\cong S^4$ we obtain an action by isometries of $\Su_2$ on~$S^4$ such that $S^4/\Su_2\cong[0,1]$. The curve $c(t)$ given by $t\mapsto(1,0,t,0)$ composed with the quotient map $\C^4\rightarrow\hp^1$ parametrizes a great circle on $S^4$ and it is such that for $0\leq t\leq1$ it intersects each orbit exactly once (the parametrization here is not geodesic unlike the usual parametrization in the literature). When $t\in(0,1)$ we have three-dimensional orbits and for $t=0,1$ the orbits are of dimension two ($c(0)$ obviously has one-dimensional stabilizer, and it is not dif\/f\/icult to see that the real twistor line above~$c(1)$ has a point with one-dimensional stabilizer). The exceptional orbits are dif\/feomorphic to~$\rp^2$ and we denote them by $\rp^+$ ($t=0$) and $\rp^-$ ($t=1$). For more details and other descriptions of this action we refer to~\cite{Gil1, Mun1, Mun2, Sad1}. Any invariant object on $S^4$ is determined by its restriction to the curve~$c(t)$.

By one-dimensional reduction, an invariant connection over the trivial complex vector bundle of rank~2, on an open set of three-dimensional orbits, is given by a function
\begin{gather*}
t\mapsto a_1(t)X_1\otimes\sigma_1+a_2(t)X_2\otimes\sigma_2+a_3(t)X_3\otimes\sigma_3\in\su_2\times\su_2^{\ast},
\end{gather*}
where $\{X_1,X_2,X_3\}$ is the standard basis of $\su_2$ and $\{\sigma_1, \sigma_2,\sigma_3\}$ is the corresponding dual basis. For the connection to be anti-self-dual (ASD) the triplet of functions $a=(a_1,a_2,a_3)$ must satisfy the dif\/ferential equations (see~\cite[p.~196]{Gil2}, \cite[p.~1045, equations~(4.1)]{Mun2}):
 \begin{gather}\label{eq2.1}
 \frac12K_1(t)\dot{a}_1=a_1-a_2a_3,\qquad\text{and cyclic permutations},
 \end{gather}
where
\begin{gather*}K_1(t)=\frac{(t^2-1)(t^2-9)}{4t}, \qquad
K_2(t)=4t\frac{(t-3)(t+1)}{(t+3)(t-1)}, \qquad
K_3(t)=4t\frac{(t+3)(t-1)}{(t-3)(t+1)},\end{gather*}
come from the fact that the basis is not orthonormal, e.g., $K_1=\frac{||X_2|| ||X_3||}{||X_1||||\dot c||}$. The dif\/ferences between expressions in~\cite{Gil2} and~\cite{Mun2} arise from the parametrization of the curve $c(t)$; in~\cite{Gil2} they use a geodesic parametrization, in~\cite{Mun2} as here the parametrization is not geodesic.

Given an initial condition $a(t_0)$ there exists a unique solution def\/ined on an open interval containing $t_0$. We are interested here mainly in solutions which are def\/ined on all the open interval $(0,1)$.

\section{Isomonodromic deformation}\label{section3}

 Twistor theory provides a way to see one-dimensional reductions of symmetric instantons as solutions to the Painlev\'e equation. Remember that the twistor space of a real antiself-dual (ASD) 4-manifold $M$ is a complex 3-manifold~$Z$, which is a~$\cp^1$-f\/iber bundle over~$M$ (the f\/ibers are called ``real twistor lines''). The pull-back to~$Z$ of any instanton over~$M$ determines a~holomorphic vector bundle on~$Z$ (this is called the ``twistor transform'' of the instanton).

Looking at the action described in the previous section and taking twistor transform, each ASD invariant connection on an $\Su_2$-bundle over an interval of three-dimensional orbits induces an isomonodromic deformation of connections on $\cp^1$ having four simple poles~\cite{Hit1,Mun2}. Each instanton def\/ines a holomorphic vector bundle and the action can be used to def\/ine a holomorphic f\/lat connection there, in such a way that horizontal sections are essentially given by the orbits of the action. The holomorphic connection is def\/ined except on certain anticanonical divisor $Y\subset\cp^3$ which intersects each real twistor line in four points, therefore the restriction to each line gives a holomorphic connection on~$\cp^1$ with four singularities. These connections on~$\cp^1$ are given by the 1-form~($\lambda$ is the variable on the line)
\begin{gather*}%\label{eq3.1}
A(t;\la)d\lambda =\sum_{j=0}^3 \frac{A_j(t)}{\la-\la_j}d\la=
-\sum_{i=1}^3a_i(t)\alpha_i(t,\la)X_id\lambda,
\end{gather*}
where the $\alpha_i(t,\la)$ are def\/ined by the inverse of the complexif\/ied inf\/initesimal action
 \begin{gather*}\alpha^{-1}(t,\la)d\la=\sum_{i=1}^3\alpha_i(t,\la)X_id\la.\end{gather*}
Remember that the inf\/initesimal action is the map $\alpha\colon \cp^3\times\su_2\rightarrow T\cp^3$ obtained as the derivative of the $\Su_2$-action, and its complexif\/ication (which we denote by the same letter) is the map $\alpha\colon\cp^3\times \mathfrak{sl}_2(\C)\rightarrow T_{\C}\cp^3$ which is of rank three, and can therefore be inverted at each point in $\cp^3\setminus Y$. As it is known from general facts about isomonodromic deformations, the square of the residue of $A(t,\la)$ at each pole has constant trace in the deformation. This gives us a conserved quantity for equations~\eqref{eq2.1}:

\begin{Proposition}\label{proposition3.1}
 Equations~\eqref{eq2.1} have a conserved quantity given by
 \begin{gather*}\frac{1-t^2}{9-t^2}a_1(t)^2+\frac{1+t}{t(3-t)}a_2(t)^2-\frac{1-t}{t(3+t)}a_3(t)^2.\end{gather*}
\end{Proposition}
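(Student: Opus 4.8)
The plan is to verify directly that the stated expression has identically vanishing $t$-derivative along every solution of~\eqref{eq2.1}. Conceptually this is already explained by the remark preceding the statement: up to an overall constant the expression equals $\operatorname{Tr}A_j(t)^2$ for one of the four residues of the isomonodromic family $A(t;\la)\,d\la$, and this trace is independent of~$t$ in the deformation; the particular rational coefficients that appear are simply what one gets after inverting the complexified infinitesimal action to write down the $\alpha_i(t,\la)$ and extracting the residue at the corresponding pole~$\la_j$. A self-contained argument, however, is just the finite computation below.

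Write
\begin{gather*}
Q(t)=c_1(t)a_1(t)^2+c_2(t)a_2(t)^2+c_3(t)a_3(t)^2,\\
c_1=\frac{1-t^2}{9-t^2},\qquad c_2=\frac{1+t}{t(3-t)},\qquad c_3=-\frac{1-t}{t(3+t)},
\end{gather*}
and rewrite~\eqref{eq2.1} as $\dot a_i=\frac{2}{K_i}\big(a_i-a_ja_k\big)$, for $(i,j,k)$ a cyclic permutation of $(1,2,3)$. Differentiating $Q$, substituting, and using that $a_ia_ja_k=a_1a_2a_3$ for every cyclic triple, one obtains
\begin{gather*}
\dot Q=\sum_{i=1}^3\left(\dot c_i+\frac{4c_i}{K_i}\right)a_i^2-\left(\sum_{i=1}^3\frac{4c_i}{K_i}\right)a_1a_2a_3.
\end{gather*}
Since solutions of~\eqref{eq2.1} take arbitrary values at any given point of the interval and the monomials $a_1^2,a_2^2,a_3^2,a_1a_2a_3$ are linearly independent, $\dot Q\equiv0$ along every solution if and only if each of the four coefficient functions vanishes identically in~$t$; that is, it suffices to prove the rational-function identities
\begin{gather*}
\dot c_i+\frac{4c_i}{K_i}=0\quad(i=1,2,3),\qquad\text{and}\qquad\sum_{i=1}^3\frac{c_i}{K_i}=0.
\end{gather*}

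Both are elementary. Each $c_i$ and each $K_i$ is a product and quotient of the linear factors $t,\,t\pm1,\,t\pm3$, so the first identity is a two-line computation for each index; for example $\dot c_1=-16t/(t^2-9)^2$ while $4c_1/K_1=16t/(t^2-9)^2$. For the second identity, simplifying the summands gives $c_1/K_1=4t/(t^2-9)^2$, $c_2/K_2=-(t+3)(t-1)/(4t^2(t-3)^2)$ and $c_3/K_3=(t-3)(t+1)/(4t^2(t+3)^2)$; since $(t^2-9)^2=(t-3)^2(t+3)^2$, placing the sum over the common denominator $4t^2(t^2-9)^2$ turns the claim into the polynomial identity $16t^3-(t+3)^3(t-1)+(t-3)^3(t+1)=0$, which holds because $(t-3)^3(t+1)-(t+3)^3(t-1)=-16t^3$ (the degree-four, degree-two and constant terms cancel).

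There is essentially no obstacle: once the system is put in the normalised form above, the statement reduces to a finite verification of rational identities in the single variable~$t$. The one point that rewards care is the sign bookkeeping in the second identity, since the linear factors occur in both orders ($1-t$ versus $t-1$, $9-t^2$ versus $t^2-9$, $3-t$ versus $t-3$); it is cleanest to normalise every numerator and denominator to the factors $t,\,t-1,\,t+1,\,t-3,\,t+3$ from the outset, after which all cancellations are manifest.
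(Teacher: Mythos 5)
Your proof is correct, and it takes a genuinely different route from the paper. The paper does not verify conservation by direct differentiation of the stated expression: it invokes the twistor/isomonodromy interpretation, namely that $\operatorname{tr}\big(A_0^2(t)\big)=-2\sum_i a_i(t)^2\alpha_{i,0}(t)^2$ is constant along the isomonodromic deformation, and then pins down the squared residues $\alpha_{i,0}(t)^2$ by differentiating that constant, feeding in special solutions of \eqref{eq2.1} (e.g., $a_2=a_3=0$) to get $\dot\alpha_{i,0}=-2\alpha_{i,0}/K_i$ and $\sum_i\alpha_{i,0}^2/K_i=0$, and normalizing with Hitchin's value $\operatorname{tr}\big(A_0^2\big)=\frac18$ for $a_1=a_2=a_3\equiv1$. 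Your argument bypasses all of that and reduces the claim to the rational identities $\dot c_i+4c_i/K_i=0$ and $\sum_i c_i/K_i=0$, which you check correctly (the final polynomial identity $16t^3-(t+3)^3(t-1)+(t-3)^3(t+1)=0$ does hold); note these are exactly the relations the paper derives for the $\alpha_{i,0}^2$, since $c_i$ is proportional to $\alpha_{i,0}^2$, so the two proofs end up resting on the same computations but with opposite logic. What your elementary route buys is self-containedness: no appeal to the twistor transform, to general isomonodromy theory, or to the external normalization from Hitchin. What the paper's route buys is the extra structure used later: it identifies the conserved quantity as $\frac18\theta^2=\operatorname{tr}\big(A_0^2\big)$, shows the traces at all four poles agree, and produces the explicit values $\alpha_{1,0}(1)=\alpha_{3,0}(1)=0$, $\alpha_{2,0}(1)^2=-\frac1{16}$, all of which are needed in Section~\ref{section4}. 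One small remark: your ``if and only if'' reduction is more than you need (only the ``if'' direction is used), and the converse would require a word about prescribing initial data at a point, but this does not affect the validity of the proof.
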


\begin{proof} Choose a pole, say $\la_0$, and call $\alpha_{i,0}(t)$ the residue of $\alpha_i(t,\la)$ at it. Then we have
 \begin{gather*}%\label{eq3.2}
 \operatorname{tr}\big(A_0^2(t)\big)=-2\sum_{i=1}^3a_i(t)^2\alpha_{i,0}(t)^2=\operatorname{const}.
\end{gather*}
Taking derivative with respect to $t$ we have
\begin{gather*}\sum_{i=1}^3({\dot a_i}\alpha_{i,0}+a_i{\dot \alpha_{i,0}})a_i\alpha_{i,0}=0,\end{gather*}
and using the ASD equations~\eqref{eq2.1}
\begin{gather*}\sum_{i=1}^3\left(2\frac{\alpha_{i,0}}{K_i}+\dot\alpha_{i,0}\right)a_i^2\alpha_{i,0}-a_1a_2a_3
\sum_{i=1}^3\frac{\alpha^2_{i,0}}{K_i}=0.\end{gather*}
As the above equation is true for any solution of~\eqref{eq2.1}, choosing for example the one with $a_2=a_3=0$, we obtain that the residue $\alpha_{1,0}$ has to satisfy
\begin{gather*}%\label{eq3.3}
\dot\alpha_{1,0}(t)=-2\frac{\alpha_{1,0}(t)}{K_1(t)},
\end{gather*}
and similarly for $i=2,3$. Then, we must also have
 \begin{gather*}\sum_{i=1}^3\frac{\alpha^2_{i,0}}{K_i}=0.\end{gather*}
Moreover, since for the solution $a_1=a_2=a_3\equiv1$ the f\/lat connection is $A=\alpha^{-1}$ we know (see~\cite{Hit1}) that $\operatorname{tr}\big(A_0^2(t)\big)=\frac18$, hence the residues satisfy
\begin{gather*}%\label{eq3.4}
 \sum_{i=1}^3\alpha_{i,0}^2=-\frac{1}{16}.
\end{gather*}
 The last three equations completely determine the $\alpha_{i,0}(t)^2$, and they are
given by
\begin{gather*}\alpha_{1,0}(t)^2=-\frac{t^2-1}{16(t^2-9)},\qquad\alpha_{2,0}(t)^2
=-\frac{t+1}{16t(3-t)},\qquad\alpha_{3,0}(t)^2=
\frac{1-t}{16t(t+3)}.\end{gather*}
As a consequence, the functions $\alpha_{i,0}(t)^2$ are independent of the pole chosen (fact already known), or in other words
\begin{gather*}
\operatorname{tr}\big(A_0^2(t)\big)=\operatorname{tr}\big(A_1^2(t)\big)=\operatorname{tr}\big(A_2^2(t)\big)= \operatorname{tr}\big(A_3^2(t)\big).\tag*{\qed}
\end{gather*}
\renewcommand{\qed}{}
\end{proof}

Let $\frac18\theta^2=\operatorname{tr}(A_0^2(t))$ be the above constant; observe that $\operatorname{tr}(A_0^2(t))$ is real but not necessarily positive, and so $\theta$ may be imaginary. Then, the eigenvalues of $A_i$ are $\pm\frac14\theta$. This kind of isomonodromic deformation is a known dress of the VI Painlev\'e equation~\cite{JiMi}. Given a solution to $\text{P}_{\text{VI}}$ we can construct the matrices $A_i$'s and viceversa. If we start with the matrix $A(t;\lambda)$, then for each value $x$ of the cross ratio of the four poles there is a unique point $y(x)\in\cp^1\setminus\{0,1,x,\infty\}$ such that $A(x;y(x))$ has a~common eigenvector with the residue at inf\/inity (corresponding to one of the eigenvalues). As Jimbo and Miwa showed~$y(x)$ is a~solution to~$\text{P}_{\text{VI}}$. In terms of $\theta$ the parameters $(\alpha,\beta,\gamma,
\delta)$ of $\text{P}_{\text{VI}}$ are given by
\begin{gather*}\alpha=\frac18(\theta\pm 2)^2, \qquad \beta=-\frac18\theta^2,\qquad
\gamma=\frac18\theta^2, \qquad \delta=-\frac18\big(\theta^2-4\big).\end{gather*}

Denoting by $P_t$ the real twistor line corresponding to the point $c(t)$ we have
\begin{gather*}
P_t\cap Y=\left\{\sqrt{\frac{t^4+18t^2-27+\sqrt{(t^2-1)(t^2-9)^3}}{8t^3}}\right\}
\end{gather*} (in homogeneous coordinates), see~\cite{Mun2}. Then, the cross ratio in our situation is
\begin{gather*}x=\frac{(t+1)(t-3)^3}{(t-1)(t+3)^3}.
\end{gather*}

Since the $A_i$'s are determined by the solution to $\text{P}_{\text{VI}}$, so are the~$a_i$'s. Following~\cite{JiMi} (see also~\cite{Mah}), let~$y(x)$ be the solution to the Painlev\'e equation. By writing
\begin{gather*}z(x)=4A(x;y(x))_{11},\end{gather*}
one has
\begin{gather*}\dot y=\frac{y(y-1)(y-x)}{2x(x-1)}\left(z+\frac{2}{y-x}\right).\end{gather*}
From this expression, and after some computations, we can see that the functions $a_i$'s are given by the following formulas
 \begin{gather}
 a_1^2 =\frac{(t^2-9)x(x-1)^2}{4(t^2-1)(y-1)(y-x)}w_1w_2 ,\label{eq3.5}\\
 a_2^2 =\frac{t(3-t)x(x-1)}{4(t+1)y(y-1)}w_2w_3,\label{eq3.6}\\
 a_3^2 =\frac{t(t+3)x^2(x-1)}{4(1-t)y(y-x)}w_1w_3,\label{eq3.7}
\end{gather}
where
\begin{gather}
 w_1 =\left(2\dot y+ \frac{(\theta-2)y^2-2\theta xy+2y+\theta x}{x(x-1)}\right),\label{eq3.8}\\
 w_2=\left(2\dot y+ \frac{(\theta-2)y^2+2(1-\theta)y+\theta x}{x(x-1)}\right),\label{eq3.9}\\
 w_3=\left(2\dot y+ \frac{(\theta-2)y^2+2y-\theta x}{x(x-1)}\right).\label{eq3.10}
\end{gather}

Reciprocally, it is not dif\/f\/icult, from the above equations, to f\/ind an explicit expression for~$y$ in terms of $a_1$, $a_2$, $a_3$. By eliminating $\dot y$ from the above equations, and after some elementary manipulations of the formulas we f\/ind:

\begin{Theorem} \label{theorem3.1} To each equivariant ASD instanton determined by $(a_1,a_2,a_3)$, with two of the~$a_i$'s not identically zero, there corresponds a solution $y(x)$ to the Painlev\'e~VI equation given~by
\begin{gather}\label{eq3.11}
y=\frac{(t-3)^3(t+1)((t^2-4t+3)a_2^2-(t^2+4t+3)a_3^2)a_1}{(t+3)(
 (t^2+2t-3)^2a_2^2-(t^2-2t-3)^2a_3^2)a_1\pm 16\theta a_2a_3t^3},
\end{gather}
where
\begin{gather*} x=\frac{(t+1)(t-3)^3}{(t-1)(t+3)^3}.
\end{gather*}
The corresponding parameters of $\text{\rm P}_{\text{\rm VI}}$ are determined by Proposition~{\rm \ref{proposition3.1}}.
\end{Theorem}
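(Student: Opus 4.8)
The plan is to start from the three relations \eqref{eq3.5}--\eqref{eq3.7}, which express $a_1^2$, $a_2^2$, $a_3^2$ as rational functions of $t$, $x$, $y$, $\dot y$ (through the auxiliary quantities $w_1$, $w_2$, $w_3$), and invert them so as to recover $y$. The substitution $x=\frac{(t+1)(t-3)^3}{(t-1)(t+3)^3}$ is fixed once and for all, so the genuine unknowns to be eliminated are $y$ and $\dot y$; since there are three equations in these two unknowns, after using one of them to solve for $\dot y$ the remaining two give a consistent system for $y$ alone, and the content of the theorem is that this system has the closed-form solution \eqref{eq3.11}.

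First I would rewrite \eqref{eq3.8}--\eqref{eq3.10} in the form $w_i = 2\dot y + r_i(t,x,y)$ with each $r_i$ an explicit rational function, and note the key algebraic fact that the three numerators $(\theta-2)y^2-2\theta xy+2y+\theta x$, $(\theta-2)y^2+2(1-\theta)y+\theta x$, and $(\theta-2)y^2+2y-\theta x$ differ only by multiples of $2(y-x)$, $2(y-1)$ and such — that is, the pairwise differences $w_1-w_2$, $w_2-w_3$, $w_1-w_3$ are $\dot y$-free and factor nicely (e.g.\ $w_1-w_3 = \frac{-2\theta x(y-1)}{x(x-1)}$-type expressions). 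This lets me eliminate $\dot y$ cleanly: from \eqref{eq3.5} and \eqref{eq3.6} I can solve for the ratio $w_1/w_3$ and $w_2/w_3$ in terms of $a_1^2$, $a_2^2$, $a_3^2$, $t$, $x$, $y$, while the identities among the $w_i$ give a second, independent set of relations among these same ratios that is linear in the $w_i$. Combining the two, the quantities $w_i$ drop out and one is left with a single equation relating $y$ to the $a_i^2$ and $t$.

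The next step is to observe that this surviving equation, although a priori quadratic in $y$, actually has the structure $(\text{stuff})\cdot y = (\text{other stuff})$ after the square roots $a_1 = \pm\sqrt{a_1^2}$ etc.\ are reintroduced: indeed the products $a_1 a_2$, $a_2 a_3$, $a_1 a_3$ that appear in \eqref{eq3.5}--\eqref{eq3.7} (as $w_iw_j$ get multiplied by the prefactors) combine so that the $\theta a_2 a_3 t^3$ cross term in the denominator of \eqref{eq3.11} is precisely the contribution coming from the $\pm$ choices of sign. Tracking these signs carefully, and factoring the polynomials in $t$ (note $t^2-4t+3=(t-1)(t-3)$, $t^2+4t+3=(t+1)(t+3)$, $t^2+2t-3=(t-1)(t+3)$, $t^2-2t-3=(t+1)(t-3)$), the linear-in-$y$ equation rearranges exactly into \eqref{eq3.11}.

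The main obstacle is bookkeeping rather than conceptual: keeping the $\pm$ ambiguities consistent across \eqref{eq3.5}--\eqref{eq3.7} (only two of the three sign choices are independent, the third being forced by the ASD equations \eqref{eq2.1} and the requirement that $a_1a_2a_3$ be real), and verifying that the apparently quadratic elimination really does collapse to a linear equation in $y$ — this collapse is what makes the answer rational and is the one place where a hidden identity among the coefficients must be checked. The statement about the $\text{P}_{\text{VI}}$ parameters is then immediate from Proposition~\ref{proposition3.1}, since $\tfrac18\theta^2 = \operatorname{tr}(A_0^2(t))$ is exactly the conserved quantity evaluated on $(a_1,a_2,a_3)$, and the formulas $\alpha=\tfrac18(\theta\pm2)^2$, $\beta=-\tfrac18\theta^2$, $\gamma=\tfrac18\theta^2$, $\delta=-\tfrac18(\theta^2-4)$ recorded above express $(\alpha,\beta,\gamma,\delta)$ through that same $\theta$.
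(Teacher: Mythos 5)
Your plan is essentially the paper's own argument: the paper likewise exploits that the differences of the $w_i$ are $\dot y$-free (specifically $w_2-w_1=\tfrac{2\theta y}{x}$, obtained by subtracting \eqref{eq3.8} from \eqref{eq3.9}), then uses the products $w_iw_j$ supplied by \eqref{eq3.5}--\eqref{eq3.7} to write $y^2$ as the square of an expression linear in $y$ involving $\tfrac{xa_1}{\theta a_2a_3}$, takes the square root (the source of the $\pm$ in the $16\theta a_2a_3t^3$ term), and solves the resulting linear equation before substituting $x(t)$; the parameter statement follows from Proposition~\ref{proposition3.1} exactly as you say. So the proposal is correct and matches the paper's route, up to bookkeeping details you left unexecuted.
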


\begin{proof} Substracting \eqref{eq3.8} from \eqref{eq3.9} we obtain
 \begin{gather*}y=\frac{x}{2\theta}(w_2-w_1),\end{gather*}
 hence
 \begin{gather*}y^2=\frac{x^2}{4\theta^2}\big(w_1^2+w_2^2-w_1w_2\big).\end{gather*}
 Now, using \eqref{eq3.5}--\eqref{eq3.7} we can f\/ind $w_1^2$, $w_2^2$ and $w_1w_2$, and substituting
 in the last identity we obtain
 \begin{gather*}y^2=\left(\frac{xa_1}{\theta a_2a_3}\left(\frac{(1-t)(y-x)a_3^2}{(t+3)x}-
 \frac{(t+1)(y-1)a_2^2}{(3-t)}\right)\right)^2.\end{gather*}
 Therefore
 \begin{gather*}y=\pm\frac{xa_1}{\theta a_2a_3}\left(\frac{(1-t)(y-x)a_3^2}{(t+3)x}-
 \frac{(t+1)(y-1)a_2^2}{(3-t)}\right),\end{gather*}
 and solving for~$y$,
 \begin{gather*}y=\frac{(t+1)(t+3)a_2^2-(1-t)(3-t)a_3^2}{\frac{(t+1)(t+3)a_2^2}{x}-(1-t)(3-t)a_3^2\pm\theta\frac{(t^2-9)(x-1)a_2a_3}{xa_1}}.\end{gather*}
 Finally, substituting the value of~$x$ we arrive at the expression given in the statement of the proposition.
\end{proof}

\begin{Remark} The condition that two of the functions~$a_i$ are identically zero is equivalent to saying that one of them vanishes at some point. Suppose that $a_3(t_0)=0$ at some $t_0\in(0,1)$, then equations \eqref{eq3.5}--\eqref{eq3.7} imply that one of the other two has to vanish at $t_0$ too. If \mbox{$a_3(t_0)=a_2(t_0)=0$}, the ASD equations imply that necessarily $a_3=a_2=0$ at all points. In other words, there are two possibilities: there are two of the~$a_i$'s that are identically zero, or $a_1(t)a_2(t)a_3(t)\neq0$ for all~$t$. On the other hand, given that two of the $a_i$'s are null it is easy to f\/ind the third of them. For example, if $a_2=a_3=0$ then
 \begin{gather*}a_1(t)=\theta\sqrt{\frac{9-t^2}{1-t^2}}.\end{gather*}
 \end{Remark}

\begin{Example}
 For the obvious solution $a_1=a_2=a_3=1$ we obtain
\begin{gather*}y=-\frac{(t-3)^2(t+1)}{(t+3)(t^2+3)},\end{gather*}
which is one of Hitchin's octahedral solutions (Poncelet polygon with $k=3$, with a dif\/ferent parametrization). In this case $\theta=1$, therefore the parameters of the Painlev\'e equation are $\big(\frac18,-\frac18,\frac18,\frac38\big)$.
\end{Example}

\begin{Example}
For the Hopf bundle the ASD connection is given by
\begin{gather*}a_1(t)=3\frac{1-t^2}{t^2+3},\qquad a_2(t)=-6\frac{t+1}{t^2+3},\qquad
a_3(t)=6\frac{1-t}{t^2+3}.\end{gather*}
Substituting in \eqref{eq3.11} we f\/ind
\begin{gather*}y(t)=-\frac{(t-3)^2 (t-1) (t+1)^2}{(t+3) \left(7 t^4+6 t^2+3\right)}.\end{gather*}
In this example the parameters of the Painlev\'e equation are $\big(\frac18,-\frac98,\frac98,-\frac58\big)$.
\end{Example}

\section{Holonomic singularities}\label{section4}

 Let $M$ be a four-dimensional Riemannian manifold and $S\subset M$ an embedded surface. For an~$\Su_2$ vector bundle over~$M$, a connection def\/ined on $M\setminus S$ has holonomic singularity along~$S$ if the connection 1-form restricted to each normal plane to~$S$ can be written as
\begin{gather}\label{eq4.1}
i\begin{pmatrix} a & 0\\ 0 & -a\end{pmatrix}d\theta+\text{lower order terms},
\end{gather}
where $(r,\theta)$ are polar coordinates in the normal plane to $S$, and $a\in[0,1/2]$ is the ``holonomy parameter''. The limit of the holonomy for
shrinking circles around $S$ is then given by
\begin{gather*}\exp 2\pi i\begin{pmatrix}-a & 0\\ 0 & a\end{pmatrix}.\end{gather*}
When $a=0$ the asymptotic holonomy is trivial and, for an appropriate def\/inition of the lower order terms, the connection is def\/ined on all of~$M$. This is the def\/inition given in~\cite{KroMro}. When $a=1/2$ the holonomy goes to~$-1$ and it is trivial if we look at the associated $\So_3$ bundle. The limit holonomies for $a=1/2+\epsilon$ and $a=1/2-\epsilon$ are conjugate to each other, and therefore equivalent.

Equivariant vector bundles on $S^4$ for the action considered here are classif\/ied by a pair of integers congruent with~1 mod~4. These integers correspond to the weights of the stabiliser of each orbit~\cite{Gil1, Gil2,Sad1}. For the existence of ASD connections (without any singularity) the weight of $\rp^+$ has to be equal to one. Let us denote by $E_n$ the equivariant vector bundle whose weights are~$n$ on $\rp^-$ and~1 on~$\rp^+$~\cite{Gil3}.

 We will consider ASD connections on $S^4$, def\/ined on $E_n$, having holonomic singularities along both special orbits~$\rp^{\pm}$. Their existence is established in~\cite{Sad1} and will be explained in the next section. For the moment we assume that they exist and we compute the parameters of the Painlev\'e equation related to them in terms of the holonomy parameter. We denote by~$\mathcal{D}$ the ASD connection in order to avoid confusion with the f\/lat connection~$\nabla$ def\/ined from the action.

\begin{Lemma} Let $(a_1,a_2,a_3)$ be a triplet defining an invariant ASD connection on $E_n\rightarrow S^4$ having holonomic singularities along $\rp^{\pm}$ with parameter~$a$ on~$\rp^{-}$. Then, $\lim\limits_{t\to1}a_2(t)=n+4a$.
\end{Lemma}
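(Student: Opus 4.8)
The plan is to extract the asymptotic behaviour of $a_2$ near $t=1$ directly from the local model \eqref{eq4.1} of a holonomic singularity, combined with the fact that the curve $c(t)$ approaches the exceptional orbit $\rp^-$ as $t\to 1$. First I would recall the geometry from Section~\ref{section2}: as $t\to 1$ the three-dimensional orbit through $c(t)$ degenerates onto the two-dimensional orbit $\rp^-$, so $c(t)$ is (up to reparametrisation) moving along a ray normal to $\rp^-$, with the radial coordinate $r$ a function of $t$ vanishing at $t=1$. Under the one-dimensional reduction, the term $a_2(t)X_2\otimes\sigma_2$ is the component of the invariant connection along the Killing field $X_2$ whose flow, restricted near $\rp^-$, is precisely the rotation in the normal plane (the stabiliser of a generic point of $\rp^-$ is one-dimensional, generated by $X_2$ after suitable normalisation). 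Matching $a_2(t)\sigma_2$ against the model $i\,\mathrm{diag}(a,-a)\,d\theta+\cdots$ as $r\to 0$ should give a limit for $a_2$ controlled by the holonomy parameter $a$.

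The key computational steps I would carry out are: (i) identify, in the basis $\{X_1,X_2,X_3\}$ of $\su_2$, which generator has a closed orbit of period $2\pi$ on the small normal circles to $\rp^-$, and compute the exact angular variable $\theta$ in \eqref{eq4.1} in terms of the dual coframe $\sigma_2$ evaluated along $c(t)$; (ii) express the weight-$n$ condition on $E_n$ at $\rp^-$: the bundle $E_n$ restricted to a normal disc is the equivariant bundle on which the stabiliser acts with weight $n$, which forces the "background" connection (the one with trivial holonomy parameter, $a=0$) to contribute an integer $n$ to the limiting value — this is the source of the summand $n$; (iii) add the contribution $4a$ coming from the holonomy term $i\,\mathrm{diag}(a,-a)\,d\theta$, where the factor $4$ arises from the normalisation of $X_2$ relative to the $2\pi$-periodic rotation generator (equivalently from the coefficient conventions fixing $\{\sigma_i\}$, consistent with the $K_i(t)$ and the parametrisation of $c(t)$ used in \eqref{eq2.1}). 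Putting these together yields $\lim_{t\to 1}a_2(t)=n+4a$.

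The main obstacle I expect is pinning down the precise numerical normalisation — i.e.\ getting exactly $n+4a$ rather than $n+4a$ times some spurious constant, or $\tfrac{n}{4}+a$, etc. This requires being careful about three conventions simultaneously: the identification of the stabiliser generator at $\rp^-$ inside $\su_2$ as used in Section~\ref{section2}; the non-geodesic parametrisation of $c(t)$ (which already shifted the $K_i$ and hence scales $\sigma_2$ near $t=1$); and the integer-weight normalisation of $E_n$ (weights "congruent to $1$ mod $4$", which strongly suggests the relevant weight lattice is being measured in units of $\tfrac14$, exactly explaining the factor $4$ on $a$ and the shift by the odd integer $n$). A clean way to fix all constants at once is to test the formula on a known case: for the globally defined ASD connection on $E_n$ the holonomy parameter is $a=0$ and one should recover $\lim_{t\to1}a_2(t)=n$; and on the Hopf bundle $E_5$ (or whichever $E_n$ it is) with the explicit $a_2(t)=-6(t+1)/(t^2+3)$ from the Example one computes $\lim_{t\to1}a_2(t)=-6$, which must match $n$ with $a=0$, thereby calibrating signs and scale. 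Once the normalisation is locked, the limit is just a matter of reading off the leading term of $a_2\sigma_2$ in the normal-plane model, so the argument is short modulo this bookkeeping.
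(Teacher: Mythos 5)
Your overall strategy is the paper's: near $\rp^-$ one compares the invariant expression of the connection (where the relevant piece is $a_2(t)\sigma_2$, with $d\theta=4\sigma_2$ supplying the factor $4$) against the local model \eqref{eq4.1}, and the weight-$n$ equivariant structure of $E_n$ along $\rp^-$ supplies the shift by $n$. But as written the proposal stops short of the one computation that actually proves the lemma. The mechanism by which $n$ enters is a change of frame: one needs a genuine local frame on a slice neighbourhood $U\times D$ of a point $x^-\in\rp^-$ (trivialising $E$ via the slice representation and parallel transport along $c$) in which \eqref{eq4.1} holds, and separately the equivariant frame built from the group action, in which the connection reads $\tfrac14 a_2(t)\,j\,d\theta+a_1 i\sigma_1+a_3 k\sigma_3$. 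The two frames differ by $\Lambda(x,y)=e^{\frac{n}{4}\theta j}\lambda(x)$, the exponential coming precisely from the weight-$n$ action of the stabiliser on the fibre at $x^-$; the inhomogeneous term $\Lambda\,d\Lambda^{-1}$ then shifts the $d\theta$-coefficient by $-\tfrac{n}{4}$, giving $\Phi=\tfrac14(a_2-n)j\,d\theta+\text{lower order}$, whence $\tfrac14(a_2-n)\to a$ and $a_2\to n+4a$. Your step (ii) asserts that the weight ``forces'' a contribution $n$ but gives no such mechanism; that assertion is exactly the content of the lemma, so leaving it at the heuristic level is a genuine gap.

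The proposed calibration does not close this gap. Logically it presupposes that the limit is a universal affine function of $(n,a)$ with coefficients independent of the instanton, which is what has to be established; moreover smooth examples all have $a=0$, so they can never fix the coefficient of $a$ (that coefficient can only come from $d\theta=4\sigma_2$, i.e., from the frame computation above). Finally the calibration itself is miscomputed: $\lim_{t\to1}\bigl(-6(t+1)/(t^2+3)\bigr)=-3$, not $-6$, and $-6\not\equiv 1 \pmod 4$, so it could not be a weight; the Hopf bundle here corresponds to $n=-3$, consistent with the classification of equivariant bundles by integers congruent to $1$ mod $4$, not to $E_5$. So the examples are useful as a sanity check of signs, but only after the gauge-theoretic comparison of frames has been carried out.
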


\begin{proof} Take a section $g\colon U\rightarrow\Su_2$, on an open neighbourhood $U$ of $x^-$ in $\rp^-\cong\Su_2/\tilde{\OO}_2'$, such that $g(x^-)=id$. Since $N_+=S^4\setminus\rp^+$ is isomorphic to the vector bundle over $\rp^-$ associated to the slice representation of $\OO'_2$, the above section gives an isomorphism
\begin{gather*}N_+\cong U\times D,\end{gather*}
$D$ being a disc perpendicular to $\rp^-$ at $x^-$.

Since $E_n|_{\rp^-}$ is the vector bundle associated to the representation of $\OO'_2$ on $E_{x^-}$ then $E|_U$ is trivial. On the other hand $E|_D$ is $\OO'_2$-equivariantly trivial, hence we have that $E|_{U\times D}$ is trivial. Taking a frame $\{s_1,s_2\}$ for $E$ on $U\times D$ given by the above trivialization (notice that necessarily $\mathcal{D}_{\dot c}s_i=0$), the connection 1-form $\Phi$ with respect to this frame has the behaviour~\eqref{eq4.1}. Let us take another frame $\{s'_1,s'_2\}$ for $E$ on $U\times (D\setminus[c([-1,0])])$ in the following way
\begin{gather*}s'_i(x,y)=g(x)e^{\tfrac14\theta j}\cdot s_i(c(t)),\end{gather*}
where $\theta\in (-\pi,\pi)$ is such that $e^{\tfrac14\theta j}\cdot c(t)=y$. With respect to this ``equivariant'' frame the connection 1-form restricted to $\Sigma$ is \begin{gather*}\Phi_t'=\tfrac14 a_2(t)jd\theta+a_1(t)i\sigma_1+a_3(t)k\sigma_3;\end{gather*}
notice that the 1-forms $\sigma_1$, $\sigma_3$ extend to the orbit of $x^-$ whereas $d\theta=4\sigma_2$ does not. Both connection 1-forms are related on $U\times (D\setminus[c([-1,0])])$ by
\begin{gather*}\Phi=\Lambda\Phi'\Lambda^{-1}+\Lambda d\Lambda^{-1},\end{gather*}
 $\Lambda$ being the change of frame matrix. It is easy to see that $\Lambda(x,y)=e^{\tfrac{n}{4} \theta j}\lambda(x)$, where $\lambda(x)$ is the matrix associated to the action of $g(x)$ (in particular $\lambda(x^-)=I$). We then have
\begin{gather*}\Phi_t=\tfrac14(a_2-n)jd\theta+\text{lower order terms},\end{gather*}
therefore $\lim\limits_{t\to 1}a_2(t)=4a+n$, since the singularity is holonomic. Remark moreover that~$a_1$ and~$a_3$ remain bounded as~$t\to 0$.
\end{proof}

Using the conclusion of the preceding lemma we can establish the following theorem, which gives the family of $\text{P}_{\text{VI}}$ equations related to the family of instantons with holonomic singularities along~$\rp^+$ and~$\rp^-$. Notice that the parameters~$\alpha$,~$\beta$,~$\gamma$,~$\delta$ depend only on the holonomy around~$\rp^-$ and not on the holonomy around~$\rp^+$; if we consider self-dual instantons the converse is true.

\begin{Theorem} Let $(E_n,\mathcal D)$ be an invariant ASD instanton with holonomic singularity along the surfaces~$\rp^{\pm}$, having holonomic parameter $a$ along $\rp^-$. Then, this instanton is determined by a solution to the Painlev\'e~VI equation with parameters
 \begin{alignat*}{3}
 & \alpha^{\pm} =\frac18(4a+n\pm 2)^2, \qquad &&\beta =-\frac18(4a+n)^2, & \\
&\gamma =\frac18(4a+n)^2, \qquad &&\delta =-\frac18\big((4a+n)^2-4\big).&
\end{alignat*}
\end{Theorem}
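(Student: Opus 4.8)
The plan is to combine the previous lemma with the general recipe established in Section~\ref{section3}. The lemma tells us that $\lim_{t\to1}a_2(t)=4a+n$. I would first argue that $a_2(t)$ is \emph{not} identically zero (if it were, the singularity along $\rp^-$ would force $n=0$, contradicting that the weights are congruent to $1$ mod $4$, or one checks directly that such a connection is not of the type under consideration), so that Theorem~\ref{theorem3.1} and, more importantly, Proposition~\ref{proposition3.1} apply: the instanton corresponds to a genuine solution of $\text{P}_{\text{VI}}$ whose parameters are governed by the constant $\tfrac18\theta^2=\operatorname{tr}(A_0^2(t))$, with $\alpha^{\pm}=\tfrac18(\theta\pm2)^2$, $\beta=-\tfrac18\theta^2$, $\gamma=\tfrac18\theta^2$, $\delta=-\tfrac18(\theta^2-4)$. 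So the whole theorem reduces to the single identity $\theta=4a+n$ (up to the irrelevant overall sign of $\theta$).

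To pin down $\theta$, I would evaluate the conserved quantity from Proposition~\ref{proposition3.1},
\begin{gather*}
\frac{1-t^2}{9-t^2}a_1(t)^2+\frac{1+t}{t(3-t)}a_2(t)^2-\frac{1-t}{t(3+t)}a_3(t)^2=\frac18\theta^2,
\end{gather*}
and take the limit $t\to1$. In that limit the first coefficient $\tfrac{1-t^2}{9-t^2}\to0$ and the third coefficient $\tfrac{1-t}{t(3+t)}\to0$; by the last remark in the proof of the lemma, $a_1$ and $a_3$ stay bounded near $t=1$, so the first and third terms vanish. The middle coefficient $\tfrac{1+t}{t(3-t)}\to 1$, so the limit of the left-hand side is simply $\big(\lim_{t\to1}a_2(t)\big)^2=(4a+n)^2$. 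Since the left-hand side is the constant $\tfrac18\theta^2$ for all $t\in(0,1)$, we get $\tfrac18\theta^2=\tfrac18(4a+n)^2$, i.e.\ $\theta=\pm(4a+n)$. Substituting $\theta=4a+n$ into the parameter formulas of Section~\ref{section3} yields exactly $\alpha^{\pm}=\tfrac18(4a+n\pm2)^2$, $\beta=\gamma=\mp\,$wait---$\beta=-\tfrac18(4a+n)^2$, $\gamma=\tfrac18(4a+n)^2$, $\delta=-\tfrac18((4a+n)^2-4)$, as claimed; the sign ambiguity in $\theta$ does not affect $\beta,\gamma,\delta$ and only swaps the two choices of $\alpha^{\pm}$.

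The one genuine subtlety---and I expect this to be the main obstacle---is the limit interchange: one must be sure that the conserved quantity, known to be constant on the open interval $(0,1)$, actually has the naive limit at the endpoint $t=1$, i.e.\ that no term blows up faster than its coefficient decays. This is controlled precisely by the boundedness assertions at $t\to1$: the lemma explicitly records that $a_1$ and $a_3$ remain bounded, and $a_2$ has a finite limit, so each of the three terms converges and the term-by-term limit is legitimate. (A small bookkeeping point is that in the paper's phrasing the lemma says ``as $t\to0$'' for the boundedness of $a_1,a_3$, but by the symmetry of the setup---or simply because $a_2$ has a finite limit at $t=1$ and the other two are bounded by the analogous local analysis near $\rp^-$---the same holds at $t=1$, which is the endpoint we need.) Once this limit is justified, the rest is the bare substitution above, and the theorem follows.
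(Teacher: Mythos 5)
Your argument is essentially the paper's own proof: both evaluate the conserved quantity $8\operatorname{tr}\big(A_0^2\big)=-16\sum_{i}a_i^2\alpha_{i,0}^2$ in the limit $t\to1$, using $\alpha_{1,0}(1)=\alpha_{3,0}(1)=0$, $\alpha_{2,0}(1)^2=-\tfrac1{16}$, the boundedness of $a_1,a_3$ near $\rp^-$ and $\lim_{t\to1}a_2=4a+n$ from the lemma, to get $\theta^2=(4a+n)^2$ and then read off the parameters from the Jimbo--Miwa formulas (and you are right that the paper's ``$t\to0$'' statements in the lemma and theorem are typos for $t\to1$). The only slip is a normalization: the expression displayed in Proposition~\ref{proposition3.1} equals $\theta^2=8\operatorname{tr}\big(A_0^2\big)$, not $\tfrac18\theta^2$, so its limit $(4a+n)^2$ yields $\theta^2=(4a+n)^2$ directly; your final identity is correct but the intermediate bookkeeping is off by a factor of~$8$.
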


\begin{proof} Remember that
\begin{gather*}
\theta^2=8\operatorname{tr}A_0^2=-16\sum_{i=1}^3a_i^2\alpha_{i,0}^2.
\end{gather*}
Taking limit when $t\to 1$ in the above formula we obtain $\theta^2= \lim\limits_{t\to 0}a_2(t)^2=(4a+n)^2$, since $\alpha_{1,0}(1)=\alpha_{3,0}(1)=0$ and $\alpha_{2,0}(1)^2=-1/16$.
\end{proof}

\section{Sadun's solutions}\label{section5}

The existence of solutions to ASD equations \eqref{eq2.1} def\/ining instantons with holonomic singularities was proved by Sadun in~\cite{Sad1}. By imposing the condition of f\/inite energy he proved that there are solutions with certain asymptotic behaviour (roughly speaking they are perturbations of regular instantons). For an invariant connection $(a_1,a_2,a_3)$ the f\/inite energy condition implies that the functions $a_i$'s are well def\/ined on all the interval $(0,1)$ and that the limits
\begin{gather*}r_+=\lim_{t\to 0}a_1(t), \qquad\text{and} \qquad r_-=\lim_{t\to 1}a_2(t)\end{gather*}
exist. Furthermore if $r_+\neq 1$ then $a_2(0)=a_3(0)=0$, and if $r_-\neq1$ then $a_1(1)=a_3(1)=0$. For the connection to be ASD we must have $|r_+|\leq1$ and if $|r_+|=1$ we also have the equality $a_2(0)=a_3(0)$ but they are not necessarily zero. When the connection is~ASD, if $r_+=1$ and $r_-\equiv 1$ mod~4 it can be extended to a smooth connection on all of~$S^4$.

Sadun's result is summarized in the following proposition.

\begin{Proposition}[Sadun]\label{proposition5.1} For any pair of real numbers $(c,r_-)$ such that $r_-\geq 1$, $0\leq c\leq c_1$ for some positive constant $c_1$, there exists a~finite energy solution $(a_1,a_2,a_3)$ to equation~\eqref{eq1.1} defined on $(0,1)$ with the following asymptotic behavior around $t=1$:
 \begin{gather*}
 a_1(t) =-c(1-t)^{(r_--1)/2}+O\big((1-t)^{(r_-+1)/2}\big),\\
 a_2(t) =r_-+O\big((1-t)^2\big),\\
 a_3(t) =c(1-t)^{(r_--1)/2}+O\big((1-t)^{(r_-+1)/2}\big).
 \end{gather*}
Moreover, the limit $r_+=\lim\limits_{t\to0}a_1(t)$ exists and takes any value exactly once in the interval $[0,1]$ when we vary~$c$.
\end{Proposition}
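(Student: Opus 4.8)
The plan is to reconstruct Sadun's argument from \cite{Sad1} as a perturbative (shooting) analysis of the ODE system \eqref{eq2.1} near the singular endpoint $t=1$, followed by a continuity/monotonicity argument in the parameter $c$ to control the behaviour at the other endpoint $t=0$. First I would rewrite the ASD system near $t=1$. Writing $s=1-t$, the coefficients $K_i(t)$ have definite orders in $s$: $K_1$ and (after a short computation) one combination vanish to first order while $K_2$ stays bounded away from $0$, so the equations for $a_1,a_3$ are of the form $s\,\dot a_i = (\text{linear part in }a_1,a_3\text{ with coefficient }\sim -(r_--1)/2\text{ after diagonalising})+\text{higher order}$, and the equation for $a_2$ is regular with $\dot a_2 = O(s)$ once $a_1,a_3$ are small. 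This is a classical irregular-but-Fuchsian-type singularity; one diagonalises the linear part in the $(a_1,a_3)$-plane — the natural variables are $u=a_1+a_3$ and $v=a_1-a_3$, since the ASD equations are symmetric under swapping indices $1\leftrightarrow 3$ with a sign — and reads off that the admissible decaying solution branch behaves like $s^{(r_--1)/2}$, with the coefficient of that leading term being exactly the free parameter $c$. The constraint $r_-\ge 1$ is precisely what makes this branch decay (or stay bounded) rather than blow up, and $a_2(1)=r_-$ is forced by the indicial equation for $a_2$.

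The existence of a solution with this prescribed asymptotic expansion is then a standard fixed-point argument: I would set up the integral-equation form of the system on a small interval $(1-\varepsilon,1)$, with the ansatz $a_1=-c\,s^{(r_--1)/2}+\tilde a_1$, $a_3=c\,s^{(r_--1)/2}+\tilde a_3$, $a_2=r_-+\tilde a_2$, and show the map sending $(\tilde a_i)$ to the right-hand sides of the integrated equations is a contraction on a suitable weighted sup-norm ball (weights $s^{(r_-+1)/2}$ for $\tilde a_1,\tilde a_3$ and $s^2$ for $\tilde a_2$), provided $c\le c_1$ for a constant $c_1$ depending on $\varepsilon$ and $r_-$; uniqueness within the class then follows. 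The bound $0\le c\le c_1$ is exactly the radius of convergence/contraction of this scheme. Standard ODE theory (the solution on $(0,1)$ already exists and is unique once an initial condition at an interior point is fixed, as noted after \eqref{eq2.1}) lets me continue this local solution across the whole interval $(0,1)$; finite energy is then checked from the decay rates, using that the Yang--Mills density is a weighted combination of $(a_i-a_ja_k)^2$ and the $K_i$.

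The last and genuinely more delicate claim is that $r_+=\lim_{t\to 0}a_1(t)$ exists and sweeps $[0,1]$ bijectively as $c$ runs over its allowed range. Existence of the limit at $t=0$ comes from the same kind of indicial analysis at the other singular endpoint (the roles of the $K_i$ are permuted; $K_1$ again degenerates and forces $a_2,a_3\to 0$ or $a_2(0)=a_3(0)$ with $a_1$ bounded, as recalled at the start of Section~\ref{section5}); the ASD inequality $|r_+|\le 1$ is what \cite{Sad1} extracts from the sign structure of \eqref{eq2.1}. For surjectivity and injectivity I would argue by continuity and monotonicity: $r_+$ depends continuously on $c$ by continuous dependence of ODE solutions on parameters (uniform on $(0,1)$ after the endpoint analysis), $c=0$ gives the solution with $a_1\equiv 0$ on the decaying branch hence $r_+=0$ — or, depending on normalisation, $r_+=1$ — while the other extreme $c=c_1$ gives the opposite endpoint value, and a comparison/monotonicity argument (tracking how $a_1(0)$ varies with $c$, using that the flow cannot have two distinct solutions cross) shows the map $c\mapsto r_+$ is strictly monotone, hence a bijection onto $[0,1]$.

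\textbf{Main obstacle.} The hardest part is the global statement about $r_+$: the local contraction near $t=1$ is routine, but propagating the estimate all the way to $t=0$ and proving the strict monotonicity of $c\mapsto r_+(c)$ — so that every value in $[0,1]$ is attained exactly once — requires a genuine global ODE argument (a monotonicity/comparison principle for \eqref{eq2.1}, or an energy/Lyapunov functional), which is the core content of Sadun's proof and cannot be reduced to the local fixed-point theorem.
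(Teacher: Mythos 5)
First, a point of comparison: the paper itself contains no proof of Proposition~\ref{proposition5.1} --- it is stated as a summary of Sadun's theorem and simply cited from~\cite{Sad1} (note also that the reference to~\eqref{eq1.1} in the statement is a typo for the ASD system~\eqref{eq2.1}, which you correctly treated as such). So your proposal can only be judged as a reconstruction of Sadun's argument, and as such it has real gaps. The local part is fine in outline: near $t=1$ one does get a Fuchsian-type singular point for $(a_1,a_3)$ with indicial exponents $(r_-\mp1)/2$ up to sign, the equation for $a_2$ forces a finite limit $r_-$ (though note $K_2$ actually blows up like $(1-t)^{-1}$ there rather than ``staying bounded away from $0$''; the conclusion $\dot a_2=O(1-t)$ is what matters), and a weighted contraction produces the one-parameter family of solutions with the stated expansion. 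One should be careful with the signs determining which combination of $a_1$, $a_3$ carries the decaying exponent, but that is bookkeeping.

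The genuine gaps are in everything after the local construction, which is precisely the content of the theorem. (i)~Extending the local solution from a neighbourhood of $t=1$ to all of $(0,1)$ is not ``standard ODE theory'': the system is quadratically nonlinear and solutions can blow up at an interior time, so one needs an a priori bound (this is where the hypotheses $r_-\geq 1$ and $0\leq c\leq c_1$ genuinely enter in Sadun's analysis, together with comparison/energy arguments; the conserved quantity of Proposition~\ref{proposition3.1} alone is indefinite and does not give one). Relatedly, your identification of $c_1$ with the radius of the local contraction scheme is almost certainly wrong: the local fixed point works for all small data at $t=1$, whereas $c_1$ is a globally determined constant tied to when $r_+$ reaches the extreme value. (ii)~The existence of $r_+=\lim_{t\to0}a_1(t)$, the bound $r_+\in[0,1]$, and above all the statement that $c\mapsto r_+(c)$ is a continuous, strictly monotone surjection onto $[0,1]$ (``every value exactly once'') are asserted via an unspecified ``comparison/monotonicity argument''; you acknowledge this yourself as the main obstacle. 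Since that bijectivity is the substance of the proposition, the proposal as written does not prove the statement --- it proves the asymptotic normal form at $t=1$ and defers the rest to the very argument it was supposed to supply. To complete it you would need the global estimates and the monotone dependence of the $t\to0$ limit on $c$ that constitute the core of~\cite{Sad1}.
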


Instantons given by the above proposition have holonomic singularities around $\rp^{\pm}$, and computing the asymptotic holonomy one obtains
\begin{gather*}\exp \left(\frac{1-r_+}{2}\pi i\right)\end{gather*}
around~$\rp^+$, and
\begin{gather*}\exp \left(\frac{1-r_-}{2}\pi j\right)\end{gather*}
around $\rp^-$. The holonomy parameter on $\rp^-$ is then given by $a=\frac{r_--1}{4}-[\frac{r_--1}{4}]$ (here we take $a\in[0,1)$). The holonomy is trivial only when $r_{\pm}\equiv 1$ mod~4, which is the case when the connection is def\/ined on all of~$S^4$.

\begin{Remark} On the complement of the singularities the vector bundles $E_n$ are all trivial, then in fact the connections of Proposition~\ref{proposition5.1} are def\/ined on the trivial vector bundle. Considering the connection with $r_-=n+4a$ on $E_1$ is the same as considering those with $r_-=4a$ on $E_n$ for $n\equiv 1$ mod~4.
\end{Remark}

Let us now look at the behaviour of the corresponding $\text{P}_{\text{VI}}$ solutions around the critical points. The critical singularities of $\text{P}_{\text{VI}}$ are situated at $x=0,1,\infty$ which correspond to $t=-1,0,1$ respectively (and $3$, $\infty$, $-3$ since there is a two-to-one correspondence between~$x$ and~$t$). Remark that the asymptotic behaviour of $y$ around a~critical point is then determined by the asymptotic behaviour of $a_1$, $a_2$, $a_3$ at $t=-1,0,1$, and viceversa. Remember that the resonant values of the parameters are those corresponding to~$\theta\in\Z$.

\begin{Proposition} The solutions to the Painelev\'e~VI equation defined from instantons given by Proposition~{\rm \ref{proposition5.1}} are not algebraic except for the resonant values of the parameters.
\end{Proposition}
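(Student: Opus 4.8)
The plan is to exploit the explicit map $y(x)$ of Theorem~\ref{theorem3.1} together with Sadun's asymptotics from Proposition~\ref{proposition5.1} in order to read off the exponent $\ell_\infty$ (equivalently $\ell_0$ or $\ell_1$) of the critical behaviour~\eqref{eq1.1}, and then to show this exponent is irrational whenever $\theta=4a+n\notin\Z$. First I would translate the singularities: since $x=\frac{(t+1)(t-3)^3}{(t-1)(t+3)^3}$, the critical point $x=\infty$ corresponds to $t=1$, and one computes $1-x\sim C(1-t)$ as $t\to1$ for a nonzero constant $C$, so the substitution $t\mapsto x$ near $t=1$ is a local biholomorphism and powers of $(1-t)$ translate directly into powers of $(1-x)$ (equivalently of $x^{-1}$ near $x=\infty$).

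Next I would feed Sadun's expansions into formula~\eqref{eq3.11}. As $t\to1$ we have $a_2\to r_-=4a+n=\theta$ while $a_1,a_3\sim \mp c(1-t)^{(r_--1)/2}$; substituting into the numerator and denominator of~\eqref{eq3.11} one finds that the denominator is dominated by the term $(t+3)(t^2+2t-3)^2a_2^2\,a_1$, which vanishes like $(1-t)^{(r_--1)/2}$, while the numerator vanishes like $a_2^2\,a_1\sim(1-t)^{(r_--1)/2}$ as well. A more careful bookkeeping — keeping the $\pm16\theta a_2a_3t^3$ correction, which scales like $(1-t)^{(r_--1)/2}$ too, and the subleading $(t^2+2t-3)^2 a_2^2$ versus $(t^2-2t-3)^2 a_3^2$ cancellation — should produce $y\sim (\text{const})\,(1-t)^{m}$ near $t=1$ with an exponent $m$ that is an affine-linear function of $r_-$, and hence, after the change of variables, $y(x)=a_\infty x^{1-\ell_\infty}(1+\mathcal O(x^{-\epsilon}))$ with $1-\ell_\infty$ expressible through $r_-=\theta$. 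I expect the upshot to be that $\ell_\infty$ (or one of $\ell_0,\ell_1$, which by Dubrovin--Mazzocco are linked) equals, up to sign and integer shift, a rational multiple of $\theta=4a+n$; concretely something of the form $\ell=\pm\theta/2+\text{integer}$ or $\ell=\pm\theta+\text{integer}$, consistent with the Dubrovin--Mazzocco identification $\mu=\theta/2$. Then for an algebraic solution one needs $\ell\in\mathbb Q$ with $0<\ell\le1$, which forces $\theta\in\mathbb Q$; but in fact more is true — by \cite[Theorem~2.1]{DubMazz} the nonresonant solution with behaviour~\eqref{eq1.1} is \emph{unique}, so the Sadun solution must coincide with it, and its $\ell_i$ are the Dubrovin--Mazzocco exponents, which are algebraic (hence the solution is algebraic) precisely on the resonant locus $\theta\in\mathbb Z$. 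So the final step is: if $\theta\notin\mathbb Z$ the exponent $\ell$ is either irrational or, when $\theta$ is a noninteger rational, fails the integrality conditions that Dubrovin--Mazzocco show are necessary for algebraicity; either way the solution is transcendental.

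The main obstacle I anticipate is the second step — extracting the precise leading term of $y$ from~\eqref{eq3.11} under Sadun's asymptotics. The numerator and denominator of~\eqref{eq3.11} each vanish to the same order $(1-t)^{(r_--1)/2}$ as $t\to1$, so the leading power of $y$ comes from a ratio of the \emph{coefficients} of these leading terms, and whether that ratio is finite and nonzero — or whether a cancellation pushes $y$ to a different power of $(1-t)$ — depends on the interplay of the $\mathcal O((1-t)^{(r_-+1)/2})$ corrections to $a_1,a_3$ and the $\mathcal O((1-t)^2)$ correction to $a_2$. Getting this bookkeeping right (and matching it against the known Dubrovin--Mazzocco exponent to confirm the formula for $\ell$ in terms of $\theta$) is the delicate part; once the exponent is pinned down, the non-algebraicity for $\theta\notin\mathbb Z$ follows immediately from the rationality-and-range constraints on the $\ell_i$ quoted in the introduction.
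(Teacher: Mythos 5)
Your overall strategy---feeding the asymptotics of Proposition~\ref{proposition5.1} into the explicit formula~\eqref{eq3.11} and testing the resulting critical behaviour at $x=\infty$ against the constraints needed for algebraicity---is the same route the paper takes, but the dominant-balance analysis you sketch is wrong at the decisive step. In the numerator of~\eqref{eq3.11} the coefficient of $a_2^2$ is $(t^2-4t+3)=(t-1)(t-3)$, which vanishes at $t=1$, so the $a_2^2a_1$ term is of order $(1-t)^{(r_-+1)/2}$, not $(1-t)^{(r_--1)/2}$; and in the denominator the term you call dominant carries $(t^2+2t-3)^2=\bigl((t-1)(t+3)\bigr)^2$, which vanishes to second order, so it is subleading---the true leading term of the denominator is exactly the $\pm16\theta a_2a_3t^3$ piece you treat as a ``correction'', of order $(1-t)^{(r_--1)/2}$ with nonvanishing coefficient $\pm16\theta^2c$. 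Hence there is no delicate cancellation to track: for $\theta=r_->1$ the numerator vanishes to strictly higher order than the denominator and $\lim_{t\to1}y=0$ outright, while for $\theta=1$ the limit is a nonzero constant. Your anticipated conclusion is also off: for $\theta>2$ the leading power of $y$ in $(1-t)$ is $1$ (coming from the factor $(t-1)(t-3)$), independent of $\theta$, so $\ell_\infty=2$ is perfectly rational and non-algebraicity cannot be deduced from irrationality of the exponent; it follows instead from the range condition $0<\ell_\infty\le1$, equivalently from $\lim_{x\to\infty}y\neq0$, which is precisely how the paper argues.

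Two further points. The change of variable near $t=1$ sends $x\to\infty$ with $x^{-1}\sim4(1-t)$; your ``$1-x\sim C(1-t)$'' is a slip, though your parenthetical shows the intended meaning. More seriously, the detour through uniqueness in \cite[Theorem~2.1]{DubMazz} is neither justified nor needed: the cited theorem provides existence of a solution with prescribed behaviour~\eqref{eq1.1} (solutions are parametrized by the pair $(a_0,\ell_0)$, not by $\theta$ alone), and the algebraic Dubrovin--Mazzocco solutions are not ``the resonant locus''. The clean finish, once the asymptotics are corrected, is the paper's: an algebraic solution must have critical behaviour of algebraic type at $\infty$ with $a_\infty\neq0$ and $0<\ell_\infty\le1$, hence $\lim_{x\to\infty}y$ is infinite or a nonzero constant, contradicting $y\to0$ for every $\theta>1$ (resonant or not); for $\theta=1$ no contradiction arises, consistent with the algebraic solutions coming from instantons that extend to all of $S^4$.
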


\begin{proof}
To be algebraic the solutions must have critical behaviour of algebraic type since the corresponding Painlev\'e equations are equivalent, via Okamoto transformation, to that of Dubrovin and Mazzoco. In particular they must satisfy $\lim\limits_{x\to\infty}y(x)=\infty$. But if $t\to 1$ then $x\to\infty$, and using the asymptotic behaviour of the functions $a_i$'s given in Proposition~\ref{proposition5.1} and the expression~\eqref{eq3.11} we see that $\lim\limits_{t\to 1}y(t)=0$ for $\theta>1$ and $\lim\limits_{t\to 1}y(t)=-c^2$ for $\theta=1$, hence they can not be algebraic. Certainly they are algebraic for instantons that can be extended through the singular orbits which correspond to the parameters with $\theta\in\Z$.
 \end{proof}

 \subsection*{Acknowledgements}
 The author would like to thank Nigel Hitchin for his suggestion to look at instantons with holonomic singularities and Gil Bor for many useful conversations. We also thank the referees for many useful suggestions that help to improve the paper. This work was partially supported by Grupo CSIC 618 (UdelaR, Uruguay).

\pdfbookmark[1]{References}{ref}
\LastPageEnding

\end{document}